\newtheorem{theorem}{Theorem}
\DeclareMathOperator*{\maximize}{maximize}
\DeclareMathOperator*{\minimize}{minimize}
\DeclareMathOperator*{\argmax}{argmax}
\DeclareMathOperator*{\subjto}{\textnormal{subject to}}
\newcommand{\suchthat}{%
  \nonscript\;
  \ifnum\currentgrouptype=16
    \middle|
  \else
    \@suchthat
  \fi
  \nonscript\;
}
\begin{document}

\title{Dynamic Joint Scheduling of Anycast Transmission and Modulation in Hybrid Unicast-Multicast SWIPT-Based IoT Sensor Networks}

\author{Do-Yup Kim, \IEEEmembership{Member, IEEE,} Chae-Bong Sohn, and Hyun-Suk Lee
\thanks{D.-Y. Kim is with the Department of Information and Communication AI Engineering, Kyungnam University, Changwon-si, Gyeongsangnam-do 51767, South Korea (e-mail: doyup09@kyungnam.ac.kr).}
\thanks{C.-B. Sohn is with the Department of Electronics and Communications Engineering at Kwangwoon University, Seoul 01897, South Korea (e-mail: cbsohn@kw.ac.kr).}
\thanks{H.-S. Lee is with the Department of Intelligent Mechatronics Engineering, Sejong University, Seoul 05006, South Korea (e-mail: hyunsuk@sejong.ac.kr).}}


\maketitle

\begin{abstract}
\begin{center}\includegraphics[width=5in]{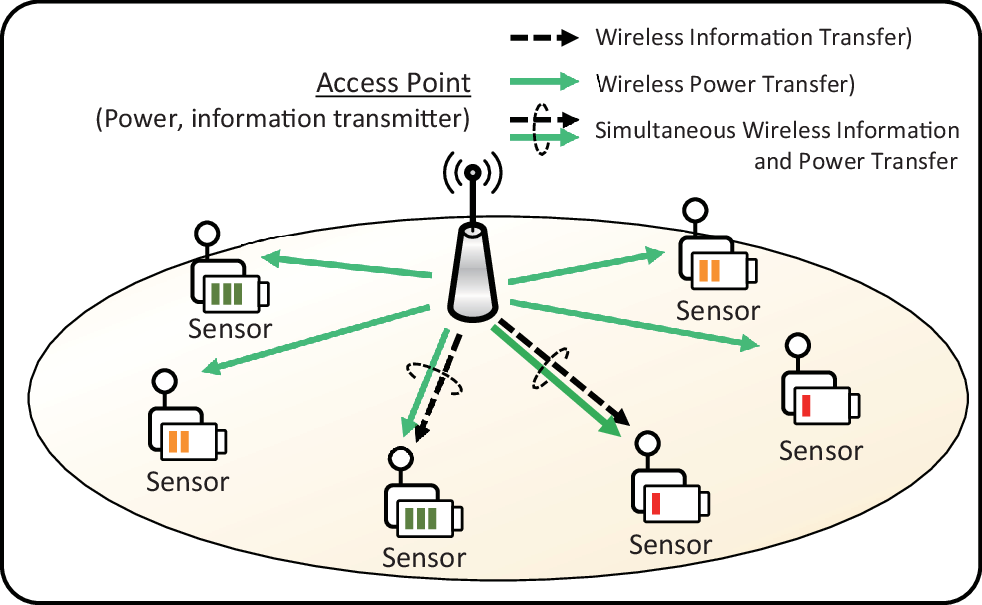}\end{center}
The separate receiver architecture with a time- or power-splitting mode, widely used for simultaneous wireless information and power transfer (SWIPT), has a major drawback: Energy-intensive local oscillators and mixers need to be installed in the information decoding (ID) component to downconvert radio frequency (RF) signals to baseband signals, resulting in high energy consumption.
As a solution to this challenge, an integrated receiver (IR) architecture has been proposed, and, in turn, various SWIPT modulation schemes compatible with the IR architecture have been developed.
However, to the best of our knowledge, no research has been conducted on modulation scheduling in SWIPT-based IoT sensor networks while taking into account the IR architecture.
Accordingly, in this paper, we address this research gap by studying the problem of joint scheduling for unicast/multicast, IoT sensor, and modulation (UMSM) in a time-slotted SWIPT-based IoT sensor network system.
To this end, we leverage mathematical modeling and optimization techniques, such as the Lagrangian duality and stochastic optimization theory, to develop an UMSM scheduling algorithm that maximizes the weighted sum of average unicast service throughput and harvested energy of IoT sensors, while ensuring the minimum average throughput of both multicast and unicast, as well as the minimum average harvested energy of IoT sensors.
Finally, we demonstrate through extensive simulations that our UMSM scheduling algorithm achieves superior energy harvesting (EH) and throughput performance while ensuring the satisfaction of specified constraints well.
\end{abstract}

\begin{IEEEkeywords}
Anycast transmissions, integrated receiver (IR), Internet-of-things (IoT), simultaneous wireless information and power transfer (SWIPT), SWIPT-based IoT sensor networks, SWIPT modulations, time-varying fading channels.
\end{IEEEkeywords}

\section{introduction}

\label{sec:intro}

The internet-of-things (IoT) has revolutionized the way we interact with devices, providing intelligent control and efficient solutions for daily tasks \cite{al2015internet, khan2021federated, nguyen20226g, lee2022radio}.
Among various IoT applications, IoT sensor networks consisting of a number of IoT sensors have gained significant attention in many fields due to their versatility, such as connecting a wide range of IoT devices \cite{mukhopadhyay2021artificial, jamshed2022challenges}.
However, their energy-constrained nature poses a significant challenge in maintaining reliable and long-term operation.
To address this challenge, wireless power transfer (WPT) has emerged as a promising solution, allowing IoT sensors to harvest energy from radio frequency (RF) signals, optical signals, etc. \cite{li2020quasi, wuthibenjaphonchai2021wearable, truong2020fundamental, nabavi2022efficient}.
Simultaneous wireless information and power transfer (SWIPT) takes this a step further by enabling wireless sensors to harvest energy from RF signals and decode information simultaneously, making it a game-changer for IoT sensor networks \cite{prathima2022uav, gao2022swipt, shukla2022exploiting, jia2022joint, nguyen2023physical, chen2021energy, ji2020swipt, rauniyar2020performance}.

The original concept of SWIPT was first proposed in \cite{varshney2008transporting}, where the fundamental tradeoff between transmitting energy and transmitting information simultaneously over a single noisy line was investigated.
Since then, researchers from various fields have conducted numerous studies to enhance various performance metrics of SWIPT by considering time-splitting and power-splitting modes \cite{zhang2013mimo, liu2013wireless, liu2013wireless2, shi2014joint}.
In these modes, the receiver architecture typically consists of separate components for information decoding (ID) and energy harvesting (EH). 
In the time-splitting mode, wireless information transfer (WIT) and WPT are performed alternately in time by the ID and EH components, respectively.
In the power-splitting mode, incoming RF signals are split in the power domain and conveyed exclusively to the ID and EH components, and WIT and WPT are carried out independently.
This receiver architecture offers a significant advantage as conventional WIT technologies in conjunction with traditional RF receivers can be easily applied and leveraged, with the simple additional consideration that the received signal periodically enters the ID component in the time-splitting mode, and the received signal with reduced power is input to the ID component in the power-splitting mode.
Despite this strong ease-of-use advantage, this receiver architecture has a critical limitation, which is that local oscillators and mixers need to be installed in the ID component to downconvert RF to baseband signals, resulting in high energy consumption.
Hence, this receiver architecture may be impractical for energy-constrained IoT sensor networks composed primarily of (ultra-)low-power IoT sensors \cite{yang2017hardware, shafiee2017infrastructure, jang2017circuit, jang2017circuit2}.

To realize SWIPT for low-power IoT sensors, a different type of receiver architecture has recently been proposed, referred to as an integrated receiver (IR) architecture \cite{zhou2013wireless}.
In this architecture, the received RF signal is first rectified and converted into a direct current (DC) signal by a rectifier circuit.
Then, the rectified DC signal is fully exploited for EH, while its strength is detected and utilized for ID.
In other words, unlike the conventional receiver architecture, the rectifier circuit in the IR architecture is designed to serve a dual purpose.
Composed simply of diodes and a few passive elements like capacitors and resistors, it is used both for EH and for ID.
As a result, the IR architecture can fully exploit the received signal for EH, improving the SWIPT system's power delivery performance.
However, compared to traditional RF receivers, since the rectifier erases information contained in phases of RF signals, modulation schemes are limited to amplitude-based ones that utilize the rectified signal's amplitudes for information transfer, resulting in somewhat deteriorated information delivery performance.
Nevertheless, the IR architecture greatly reduces power consumption in the ID process by eliminating energy-consuming RF components, including local oscillators and mixers, used in traditional RF receivers, making it a promising solution for low-power IoT sensors.

To take advantage of the aforementioned benefits, several amplitude-based modulation schemes compatible with an IR architecture have been proposed \cite{zhou2013wireless, kim2016new, claessens2018enhanced, krikidis2019tone, rajabi2018modulation, claessens2019multitone, im2020multi, kim2022wireless}.
In \cite{zhou2013wireless}, the authors proposed a pulse energy modulation (PEM) scheme that conveys information through the amplitude of the rectified DC signal.
Mathematically, PEM can be considered equivalent to amplitude shift keying (ASK) under the assumption that all constellation points are posed only in the positive region \cite{tse2005fundamentals}.
With the revelation that multisine waveforms with high peak-to-average-power-ratios (PAPRs) lead to high EH performance \cite{clerckx2016waveform}, the authors in \cite{kim2016new} proposed a new modulation scheme that exploits the PAPR value of the rectified DC signal in determining which symbol is transmitted in the SWIPT system using multisine waveforms as transmitting RF signals.
The authors additionally analyzed the bit error rate (BER) and the rate-energy tradeoff.
In \cite{claessens2018enhanced}, the PEM was extended to biased-ASK (BASK), whose effective constellation region starts from a certain positive value other than zero, thereby enhancing the EH performance significantly at the expense of the BER performance.
In \cite{krikidis2019tone}, the authors proposed the tone index multisine modulation scheme, which conveys information through the number of tones used.
Then, in \cite{im2020multi}, the multi-tone amplitude modulation (MAM) scheme that conveys information through the combination of the amplitude and the number of subcarriers used was proposed.
MAM can be considered the generalized version of the modulation schemes proposed in \cite{zhou2013wireless, kim2016new, claessens2018enhanced, krikidis2019tone}.
In \cite{rajabi2018modulation} and \cite{claessens2019multitone}, the authors proposed modulation schemes using RF signals consisting of multiple tones with different frequencies.
In these schemes, information is conveyed through the magnitudes of the rectified signal's DC component and intermodulation product component, or the input frequency spacings and frequencies of the rectifier output intermodulation product, respectively.
In \cite{kim2022wireless}, the authors proposed a pulse position modulation scheme in which information is encoded in the position of the pulse.

Although various modulation schemes for SWIPT compatible with an IR architecture have been proposed, all the works mentioned above have considered a simple receiver architecture with a circuit consisting of one diode and one resistor–capacitor (RC) filter, thereby limiting the emergence of new ideas structurally.
In contrast, in our previous work \cite{kim2019dual}, we considered an IR architecture for SWIPT that leverages a double half-wave rectifier circuit, consisting of two pairs of diodes and capacitors.
This circuit rectifies the received signal in different directions, yielding two different DC signals: a positive DC signal and a negative one.
Accordingly, information can be encoded based on any arbitrary function of them.
We named this modulation methodology dual amplitude shift keying (DASK).
Furthermore, we present two realizations of DASK using two different encoding functions, namely, amplitude ratio shift keying (ARSK) and amplitude difference shift keying (ADSK).
It has already been demonstrated that these modulation schemes achieve not only higher BER performance but also higher EH performance compared to the existing modulation schemes, such as PEM and BASK.

Many studies, including \cite{zhou2013wireless, kim2016new, claessens2018enhanced, krikidis2019tone, rajabi2018modulation, claessens2019multitone, im2020multi, kim2022wireless} and our own \cite{kim2019dual}, have investigated SWIPT modulations compatible with an IR architecture.
However, these studies have primarily focused on peer-to-peer communications and have not explored the utilization of SWIPT modulations.
To the best of our knowledge, no research has investigated modulation scheduling in SWIPT-based IoT sensor networks consisting of multiple IoT sensors with IR architectures.
Therefore, in this paper, we aim to fill this research gap by utilizing ARSK and ADSK in a time-slotted SWIPT-based IoT sensor network system where a hybrid access point (H-AP) delivers data streams and energy via RF signals to multiple IoT sensors participating in several tasks.
Additionally, the appropriate use of multicast transmission, which delivers the same data stream or software updates simultaneously to multiple IoT sensors participating in a specific task, can be particularly effective in SWIPT-based IoT sensor networks  \cite{hao2019energy, gautam2020multigroup, mishra2018energy, tan2021energy}, compared to unicast transmission, which sends a separate data stream to each individual IoT sensor.
Hence, we address the problem of joint scheduling for unicast/multicast, IoT sensor, and modulation (UMSM) in the time-slotted SWIPT-based IoT sensor network.
Specifically, the objective is to optimize scheduling to maximize the weighted sum of the average unicast service throughput and harvested energy of IoT sensors while ensuring the minimum average throughput of both multicast and unicast, as well as the minimum average harvested energy of IoT sensors.
To achieve this goal, we develop a UMSM scheduling algorithm that leverages the stochastic nature of wireless fading channels to dynamically determine the appropriate communication service type (multicast or unicast), select the most suitable task (if multicast) or sensor (if unicast), and choose the optimal modulation scheme between ARSK and ADSK for each time slot.
Our approach combines mathematical modeling and optimization techniques, such as the Lagrangian duality and stochastic optimization theory \cite{park2022joint, kim2022low}.
To the best of our knowledge, this is the first work that jointly considers communication service type scheduling, as well as modulation scheduling, in SWIPT-based IoT sensor networks, taking into account the IR architecture.
Finally, we evaluate the performance of our proposed algorithms through extensive simulations and demonstrate the superiority of our algorithm in terms of both EH and throughput performance.

The remaining sections of the paper are structured as follows.
In Section \ref{sec:sys}, we present the system model for SWIPT-based IoT sensor networks.
In Section \ref{sec:scheduling}, we formulate the UMSM scheduling problem and present an algorithm that solves it, called the UMSM scheduling algorithm.
In Section \ref{sec:joint_selection}, we develop the UMSM selection algorithm, an internal algorithm that runs every time slot within the UMSM scheduling algorithm.
The simulation results are presented in Section \ref{sec:sim}, and our conclusions and recommendations based on the results are provided in Section \ref{sec:conc}.

\section{System Model}
\label{sec:sys}

\subsection{Signal Model, Receiver Model, and Modulation Scheme}

\begin{figure}[!t]
\centering
\subfloat[Architecture model.]{\includegraphics[width=.6\columnwidth]{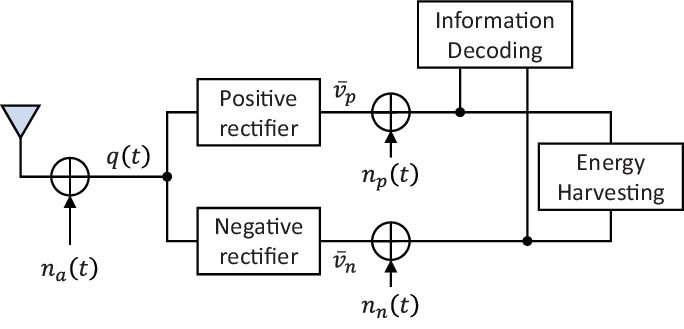}\label{fig:Receiver architecture}}
\hfil
\subfloat[Circuit model.]{\includegraphics[width=.6\columnwidth]{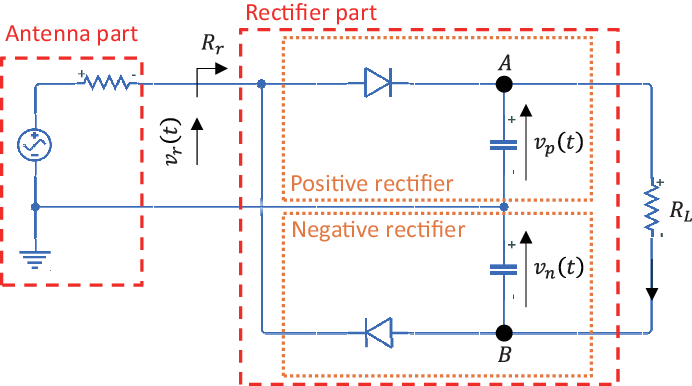}\label{fig:Receiver circuit}}
\caption{A proposed IR model for SWIPT.}
\label{fig:Receiver}
\end{figure}

In our previous work \cite{kim2019dual}, we have proposed an IR architecture for SWIPT, as shown in Fig. \ref{fig:Receiver}, and introduced a novel modulation methodology called DASK.
We have presented two representative examples of this methodology, namely ARSK and ADSK, and demonstrated their superior performance compared to conventional schemes, such as PEM and BASK \cite{zhou2013wireless, claessens2018enhanced}.
In this subsection, we tailor the findings from \cite{kim2019dual} to the system model that we consider in this paper and provide the relevant background information to exploit the benefits of using ARSK and ADSK.
Specifically, we describe the signal and receiver models, and define the performance metrics for EH and ID.

In many literature related to WPT and SWIPT (e.g., \cite{kim2016new, krikidis2019tone, rajabi2018modulation, claessens2019multitone, im2020multi, kim2022wireless, clerckx2016waveform}), RF signals of multisine waveforms, also known as tones, have been widely used due to their high EH efficiency.
Hence, although not necessarily, we also adopt general RF signals composed of $N$ sine waves, with the index set denoted as $\mathcal{F}=\{1,2,\ldots,N\}$.
Additionally, we consider an $M$-ary transmission based on a symbol set with $M$ symbols, with the index set denoted as $\mathcal{M}=\{1,2,\ldots,M\}$.
Then, the transmitted signal for symbol $m\in\mathcal{M}$ can be expressed as
\begin{equation}
	s_m(t) = \sum_{n\in\mathcal{F}} a_{n,m} \cos (2 \pi f_n t + \phi_{n,m} ),
\end{equation}
where $a_{n,m}$ and $\phi_{n,m}$ are the amplitude and phase of the $n$th tone with frequency $f_n$ for the $m$th symbol, respectively.
We consider the transmitted power constraint, due to the transmitter's power budget, as
\begin{equation}
	\mathbb{E}\{\lvert s_m(t)\rvert^2\} \le P_t,
\end{equation}
where $\mathbb{E}$ is the expectation operator, and $P_t$ is the transmit power budget of the transmitter.
Accordingly, the received signal is given by
\begin{equation} \label{eq:rm}
	q_m(t) = \sum_{n\in\mathcal{F}} h_n a_{n,m} \cos (2 \pi f_n t + \phi_{n,m} ) + n_a(t),
\end{equation}
where $n_a(t)$ is the antenna noise, and $h_n$ is the wireless channel gain for the $n$th tone. 
Note that the wireless channel gain will be modeled in detail later in Section \ref{sec:swipt_system_model}.

Fig. \ref{fig:Receiver} illustrates our proposed IR architecture for SWIPT, which consists of an architecture model shown in Fig. \ref{fig:Receiver architecture} and a circuit model in Fig. \ref{fig:Receiver circuit}.
In these figures, $q(t)$ represents the received signal obtained from \eqref{eq:rm},\footnote{For brevity, unless there is confusion, we omit the subscript $m$ in $q_m(t)$.} $\bar{v}_p$ and $\bar{v}_n$ are the amplitudes of the output DC signals rectified in different directions by the positive and negative rectifiers, respectively, $n_p(t)$ and $n_n(t)$ are the additive rectifier noises modeled as normal distributions with zero mean and variance $\sigma_p^2$ and $\sigma_n^2$, respectively, $R_r$ is the input impedance of the rectifier part, and $R_L$ is the load resistance corresponding to the EH module.
It is worth noting that Fig. \ref{fig:Receiver architecture} and Fig. \ref{fig:Receiver circuit} are related to each other, as $v_r(t) = q(t)\sqrt{R_r}$ and the voltage potentials at points $A$ and $B$ in the circuit model correspond to $\bar{v}_p$ and $\bar{v}_n$, respectively, in the architecture model.

Through the circuit analysis based on the Shockley diode nonlinear equation $i_d = I_s (\exp(v_d/(\rho V_{th}))-1)$, where $i_d$ is the diode current, $I_s$ is the reverse bias saturation current, $v_d$ is the voltage across the diode, $\rho$ is the ideality factor, and $V_{th}$ is the thermal voltage, we can derive closed-form expressions for $\bar{v}_p$ and $\bar{v}_n$, respectively, as
\begin{align}
	\bar{v}_p &=  \rho V_{th} \left(\mathcal{W} \left(\chi\sqrt{g_1g_2} \exp(\chi)/T \right) - \chi + \ln \sqrt{g1/g2}\right), \label{eq:vp}\\
	\bar{v}_n &= -\rho V_{th} \left(\mathcal{W} \left(\chi\sqrt{g_1g_2} \exp(\chi)/T \right) - \chi - \ln \sqrt{g1/g2}\right), \label{eq:vn}
\end{align}
where $\mathcal{W}(\cdot)$ is the Lambert W-function \cite{corless1996lambert}, $T$ is the period of the input signal to the rectifier, $\chi=(I_sR_L)/(2\rho V_{th})$, $g_1=\int_T \exp(v_r(t)/(\rho V_{th})) \, dt$, and $g_2=\int_T \exp(-v_r(t)/(\rho V_{th})) \, dt$.
Without loss of generality, we assume that $\bar{v}_p$ is positive, and accordingly, $\bar{v}_n$ is negative.
The detailed derivation process of \eqref{eq:vp} and \eqref{eq:vn} is given in \cite{kim2019dual}.
Then, the average harvested energy per unit time, denoted by $\varepsilon$, can be given as
\begin{equation}
	\varepsilon = \mathbb{E} \left\{ \frac{(\bar{v}_{p,m}-\bar{v}_{n,m})^2}{R_L} \right\} = \frac{1}{\lvert\mathcal{M}\rvert} \sum_{m\in\mathcal{M}} \frac{(\bar{v}_{p,m}-\bar{v}_{n,m})^2}{R_L}, \label{eq:E}
\end{equation}
where $\lvert\cdot\rvert$ for a set denotes the cardinality operator, and $\bar{v}_{p,m}$ and $\bar{v}_{n,m}$ correspond to $\bar{v}_p$ and $\bar{v}_n$ for symbol $m$, respectively.
That is, they are the two amplitudes of the output DC signal with respect to $q_m(t)$.

Now, we explain the modulation methodology of DASK.
The main idea, which differs from conventional methods, is to convey information through a combination of $\bar{v}_p$ and $\bar{v}_n$, i.e., the value of $\psi(\bar{v}_p, \bar{v}_n)$, where $\psi$ is a symbol mapping function that can be defined arbitrarily as long as it is a one-to-one mapping.
Therefore, depending on how we set up $\psi$, DASK can be implemented in various ways.
Among these various implementations, ARSK and ADSK have been presented in \cite{kim2019dual}, and it has already been demonstrated that they have superior performance compared to existing methods.
Hence, we adopt ARSK and ADSK as candidate modulation schemes in this paper.
For simple notation, we hereafter use superscripts $(\cdot)^R$ and $(\cdot)^D$ to represent ARSK and ADSK, respectively.
Then, the symbol mapping functions for ARSK and ADSK are, respectively, defined by
\begin{align}
	\psi^R (\bar{v}_p, \bar{v}_n) &= \lvert \bar{v}_p \rvert / \lvert \bar{v}_n \rvert, \label{eq:phiR}\\
	\psi^D (\bar{v}_p, \bar{v}_n) &= \bar{v}_p - \bar{v}_n = \lvert \bar{v}_p \rvert + \lvert \bar{v}_n \rvert. \label{eq:phiD}
\end{align}
With the widely accepted assumption that antenna noise is dominated by rectifier noise \cite{zhang2013mimo}, the transmitted symbol can be determined based on the observed value of $\psi(\bar{v}_p+n_p, \bar{v}_n+n_n)$, where $n_p\sim\mathcal{N}(0,\sigma_p^2)$ and $n_n\sim\mathcal{N}(0,\sigma_n^2)$, 
According to \eqref{eq:phiR} and \eqref{eq:phiD}, the observed values for ARSK and ADSK follow random variables $\Omega^R = \Omega_p/\Omega_n$, where $\Omega_p\sim\mathcal{N}(\bar{v}_p,\sigma_p^2)$ and $\Omega_n\sim\mathcal{N}(\bar{v}_n,\sigma_p^2)$, and $\Omega^D\sim\mathcal{N}(\bar{v}_p-\bar{v}_n, \sigma_p^2+\sigma_n^2)$, respectively.
Accordingly, their conditional probability density functions (PDFs) when the $m$th symbol has been transmitted can be given in closed form, respectively, as
\begin{align}
	f_{\Omega^R} (\omega \mid m) &= \frac{\beta(\omega)\cdot \theta(\omega)}{\sqrt{2\pi} \sigma_p \sigma_n \alpha^3(\omega)} \left[ Q\left(-\frac{\beta(\omega)}{\alpha(\omega)} \right) - Q\left(\frac{\beta(\omega)}{\alpha(\omega)}\right)\right] \nonumber\\
	&\quad +\frac{\exp(-\gamma/2)}{\pi\sigma_p\sigma_n \alpha^2(\omega)} \label{eq:pdfR}\\
	f_{\Omega^D} (\omega \mid m) &= \frac{1}{\sqrt{2\pi(\sigma_p^2+\sigma_n^2)}} \exp\left(-\frac{(\omega-\bar{v}_{p,m}+\bar{v}_{n,m})^2}{2(\sigma_p^2+\sigma_n^2)}\right), \label{eq:pdfD}
\end{align}
where $\alpha(\omega)=\sqrt{\frac{\omega^2}{\sigma_p^2} + \frac{1}{\sigma_n^2}}$, $\beta(\omega)=\frac{\bar{v}_{p,m}\omega}{\sigma_p^2}-\frac{\bar{v}_{n,m}}{\sigma_n^2}$, $\gamma=\frac{\bar{v}_{p,m}^2}{\sigma_p^2}+\frac{\bar{v}_{n,m}^2}{\sigma_n^2}$, $\theta(\omega)=\exp(\frac{\beta^2(\omega)-\gamma a^2(\omega)}{2a^2(\omega)})$, and $Q(\cdot)$ is the Q-function defined by $Q(x)=\frac{1}{\sqrt{2\pi}} \int_x^\infty e^{-t^2/2}\,dt$.
With the equiprobable symbols, the receiver can optimally estimate the transmitted symbol as $m^* = \argmax_m f_\Omega(\omega\mid m)$, where $\Omega$ is $\Omega^R$ for ARSK and $\Omega^D$ for ADSK.
Based on \eqref{eq:pdfR} and \eqref{eq:pdfD}, the bit error rate (BER) curve with respect to the received power, denoted by $P_r$, can be obtained easily using the Monte Carlo method.
Note that $P_r = \mathbb{E} \{ \lvert q_m(t) \rvert^2 \}$.
Then, by using any curve fitting techniques, we can generate functions $\iota^R$ and $\iota^D$ that map $P_r$ to the BER of ARSK and that of ADSK, respectively.
Thereby, the throughput, measured in bits per channel use (bpcu), can be defined, as in \cite{kim2016new, kim2022wireless}, by
\begin{equation}
	r^i = \left(1-\iota^i(P_r)\right) \frac{\log_2 M}{1+\kappa}, ~ i\in\{R, D\}, \label{eq:Ri}
\end{equation}
where $\kappa$ is the roll-off factor, which is simply assumed to be zero for the minimum Nyquist bandwidth \cite{kim2016new}.
Fig. \ref{fig:Throughput_curves} illustrates a comparison of the throughput curves for ARSK and ADSK as a function of received power when the modulation order is set to $4$, i.e., $M=4$.
In the figure, the true values are obtained based on \eqref{eq:Ri} and Monte Carlo simulation, while the curve-fitting curves are generated using a trained neural network with $10$ hidden layers and the Levenberg-Marquardt algorithm as the training algorithm.
Similarly, Fig. \ref{fig:Energy_curves} compares the harvested energy curves for ARSK and ADSK, where the true values are obtained based on \eqref{eq:E}, and the curve-fitting curves are obtained using a trained neural network with the same structure as in Fig. \ref{fig:Throughput_curves}.
Consequently, if only the received power is determined, the throughput and harvested energy can be achieved by inputting it to the corresponding trained neural networks.
These throughput and harvested energy are regarded as the performance metrics in terms of ID and EH, respectively.

\subsection{SWIPT-Based IoT Sensor Networks}
\label{sec:swipt_system_model}

Consider a time-slotted SWIPT-based IoT sensor network system, where an H-AP serves $K$ IoT sensors\footnote{For the sake of brevity, we will also refer to ``IoT sensors'' simply as ``sensors'' hereafter.} participating in $C$ tasks.
We assume that both the H-AP and sensors are equipped with a single antenna, and the sensors are composed of the IR architecture depicted in Fig. \ref{fig:Receiver}.
We also assume that $C\le K$ to account for scenarios where more than one sensor is participating in the same task.
The index sets for sensors and tasks are denoted by $\mathcal{K}=\{1,\ldots,K\}$ and $\mathcal{C}=\{1,\ldots,C\}$, respectively.
Let $\mathcal{K}(c) \subseteq \mathcal{K}$ represent the set of sensors involved in task $c$, and let $\ell(k)\in\mathcal{C}$ denote the index of the task in which sensor $k$ participates.
In this system, two types of data streams are considered: multicast streams for tasks and unicast streams for individual sensors.
Accordingly, the H-AP determines whether to provide a multicast service for a specific task $c$ to sensors in $\mathcal{K}(c)$ or a unicast service to a particular sensor $k$, based on the given information at the beginning of each time slot.

\begin{figure}[!t]
\centerline{\includegraphics[width=.6\columnwidth]{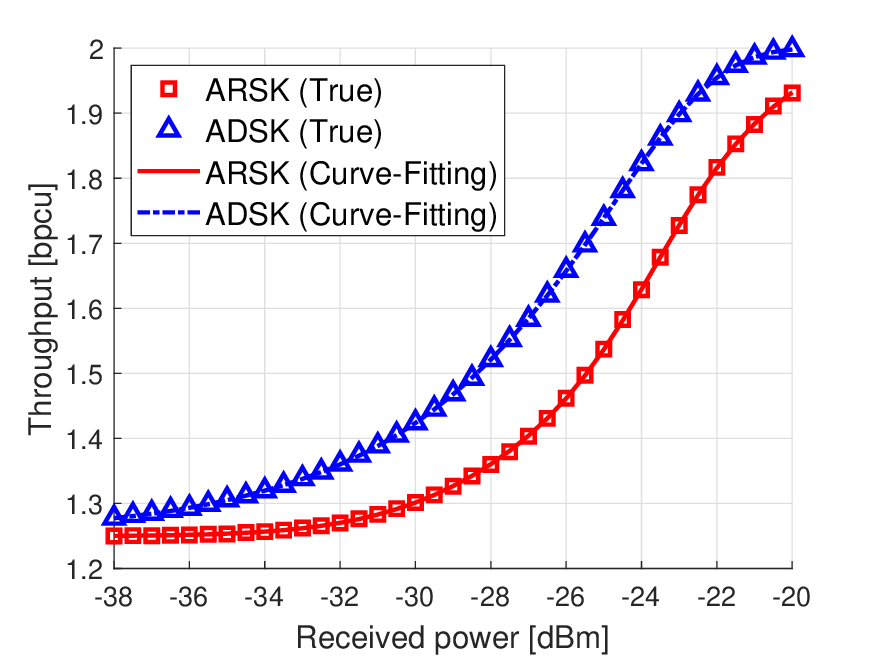}}
\caption{Throughput with respect to the received power.}
\label{fig:Throughput_curves}
\end{figure}
\begin{figure}[!t]
\centerline{\includegraphics[width=.6\columnwidth]{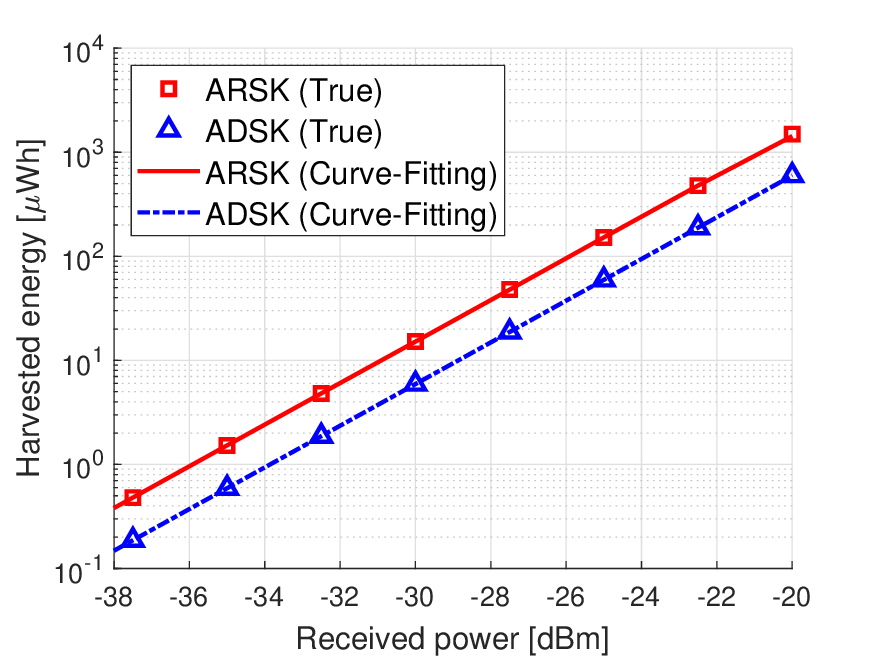}}
\caption{Harvested energy with respect to the received power.}
\label{fig:Energy_curves}
\end{figure}

For the wireless channel model, we assume a block fading channel model, where the channel gain between the H-AP and each sensor is time-varying across time slots and frequency-selective across different tones, but remains constant during a time slot within the same tone.
We denote the channel gain that captures the joint effects of path loss, shadowing, and multipath fading between the H-AP and sensor~$k$ at the $n$th tone in time slot $\tau$ by $h_{k,n}^\tau$.
Note that $h_{k,n}^\tau$ corresponds to $h_n$ in \eqref{eq:rm} for sensor $k$ in time slot $\tau$.
Therefore, the fading process associated with sensor $k$ can be defined by $\{\mathbf{h}_k^\tau, \tau = 1, 2, \ldots\}$, where $\mathbf{h}_k^\tau=\{h_{k,n}^\tau\}_{n\in\mathcal{F}}$ is the channel vector for sensor $k$ in time slot $\tau$.
For convenience, we also define the channel vector of all sensors in time slot $\tau$ as $\mathbf{h}^\tau = \{\mathbf{h}_k^\tau\}_{k\in\mathcal{K}}$.
We assume that the fading process is stationary and ergodic.
To consider a more practical situation, we assume that the H-AP does not have information on the underlying distributions of the fading processes a priori.
However, the instantaneous channel gains are known to the H-AP at the beginning of each time slot through pilot-aided channel estimation techniques.
Based on these channel gains, the H-AP finds a solution to the scheduling problem and makes decisions every time slot accordingly.
We will described these decisions in detail later in Section \ref{sec:scheduling}.

In each time slot, the H-AP determines whether to provide a multicast service for a specific task or a unicast service to a specific sensor, and then, it determines whether to employ ARSK or ADSK as the modulation technique in that time slot.
Accordingly, we define three types of scheduling indicators as                                                                                                                                             
\begin{equation}
	x_c^\tau = \begin{dcases}
		1, & \text{if content for task $c$ is selected to be multicast}\\
		   & \text{to sensors in $\mathcal{K}(c)$ in time slot $\tau$},\\
		0, & \text{otherwise,}
	\end{dcases}
\end{equation}
\begin{equation}
	y_k^\tau = \begin{dcases}
		1, & \text{if sensor $k$ is selected to receive its}\\
		   & \text{unicast service in time slot $\tau$},\\
		0, & \text{otherwise,}
	\end{dcases}
\end{equation}
\begin{equation}
	z^\tau = \begin{dcases}
		1, & \text{if ARSK is employed in time slot $\tau$},\\
		0, & \text{if ADSK is employed in time slot $\tau$.}
	\end{dcases}
\end{equation}
For brevity, let $\mathbf{x}^\tau = \{x_c^\tau\}_{c\in\mathcal{C}}$ and $\mathbf{y}^\tau = \{x_k^\tau\}_{k\in\mathcal{K}}$.
Since at most one stream, either multicast or unicast, can be serviced in each time slot, the scheduling constraints are considered as
\begin{equation}
	\sum_{c\in\mathcal{C}} x_c^\tau + \sum_{k\in\mathcal{K}} y_k^\tau \le 1, ~ \tau = 1, 2, \ldots. \label{eq:constraint_scheduling}
\end{equation}

In the case where sensor $k$ is selected as an information receiver, its effective throughput in time slot $\tau$ is given by
\begin{equation}
	r_k(z^\tau) = z^\tau \cdot r_k^{R,\tau} + (1-z^\tau) \cdot r_k^{D,\tau}, \label{eq:Rktau}
\end{equation}
where $r_k^{i,\tau}$, for $i\in\{R, D\}$, corresponds to $r^i$ in \eqref{eq:Ri} for sensor $k$ in time slot $\tau$ with a channel gain of $\mathbf{h}^\tau$.
For brevity, we use $r_k(z^\tau)$ and $r_k^\tau$ interchangeably unless there is confusion.
Then, the average throughput of sensor $k$ for the multicast service for its involved task, $\ell(k)$, in time slot $\tau$ is given by
\begin{equation}
	B_k = \lim_{T\to\infty} \frac{1}{T} \sum_{\tau=1}^T x_{\ell(k)}^\tau r_k^\tau,
\end{equation}
and, similarly, its average throughput for the unicast service in time slot $\tau$ is given by
\begin{equation}
	U_k = \lim_{T\to\infty} \frac{1}{T} \sum_{\tau=1}^T y_{k}^\tau r_k^\tau.
\end{equation}
We then consider the average throughput requirements for the multicast and unicast services of sensors, respectively, as
\begin{align}
	B_k &\ge \bar{B}_{\ell(k)}, ~ \forall k\in\mathcal{K} \label{eq:constraint_br}\\
	U_k &\ge \bar{U}_{k}, ~ \forall k\in\mathcal{K}, \label{eq:constraint_uni}
\end{align}
where $\bar{B}_c$ is the minimum average throughput of the multicast service for task $c$, and $\bar{U}_k$ is that of the unicast service for sensor $k$.
To simplify the notation, we refer to these constraints as the multicast and unicast constraints, respectively.

Similarly to the formulated throughput above, the effective harvested energy of sensor $k$ in time slot $\tau$ is given by
\begin{equation}
	\varepsilon_k(z^\tau) = z^\tau \cdot \varepsilon_k^{R,\tau} + (1-z^\tau) \cdot \varepsilon_k^{D,\tau}, \label{eq:Ektau}
\end{equation}
where $\varepsilon_k^{R,\tau}$ and $\varepsilon_k^{D,\tau}$ correspond to the harvested energy $\varepsilon$, defined in \eqref{eq:E}, of sensor $k$ in time slot $\tau$ with a channel gain of $\mathbf{h}_k^\tau$ when the transmission signal is generated based on ARSK and ADSK, respectively.
For brevity, we use $\varepsilon_k(z^\tau)$ and $\varepsilon_k^\tau$ interchangeably.
Then, the average harvested energy of sensor $k$ in time slot $\tau$ is defined by
\begin{equation}
	E_k = \lim_{T\to\infty} \frac{1}{T} \sum_{\tau=1}^T \varepsilon_k^\tau.
\end{equation}
We impose the energy constraints by setting requirements for the average harvested energy of sensors, which are given by
\begin{equation}
	E_k \ge \bar{E}_k, ~ \forall k\in\mathcal{K}, \label{eq:constraint_E}
\end{equation}
where $\bar{E}_k$ is the minimum average harvested energy of sensor $k$.
It is noteworthy that unlike the case of receiving information, all sensors, regardless of the scheduling result, harvest energy from the signal transmitted by the H-AP every time slot.
As a result, in each time slot, aside from uncontrollable system conditions such as channel gains, the only impact on EH is which modulation scheme is employed in that time slot.

\section{Scheduling on Unicast/Multicast, IoT Sensor, and Modulation}
\label{sec:scheduling}

In this section, we begin by formulating the UMSM scheduling problem in the SWIPT-based IoT sensor network system.
We then develop the UMSM scheduling algorithm, which determines whether to use multicast and unicast services, selects an appropriate task to serve (if multicast) or sensor (if unicast), and chooses the optimal modulation scheme between ARSK and ADSK for each time slot, thereby solving the UMSM scheduling problem that will be formulated below.

The objective of the UMSM scheduling problem is to determine the optimal scheduling indicators ($\mathbf{x}^\tau$ for multicast services, $\mathbf{y}^\tau$ for unicast services, and $z^\tau$ for modulation schemes) for each time slot $\tau$ in order to maximize the weighted sum of the average unicast service throughput and the average harvested energy of sensors.
This must be achieved while ensuring the minimum average multicast throughput, $\bar{B}_{\ell(k)}$, the minimum average unicast throughput, $\bar{U}_k$, and the minimum average harvested energy, $\bar{E}_k$, of each sensor $k\in\mathcal{K}$.
To formally address this problem, we present the following optimization formulation:
\begin{IEEEeqnarray}{c'l}
	\IEEEyesnumber\label{prob:primal_in_time}
	\maximize_{\mathbf{x}^\tau,\,\mathbf{y}^\tau,\,z^\tau,\,\forall \tau} & \sum_{k\in\mathcal{K}} \left( w_k^u U_k + w_k^\varepsilon E_k \right) \IEEEyessubnumber \\
	\subjto
	 & \eqref{eq:constraint_scheduling}, \eqref{eq:constraint_br}, \eqref{eq:constraint_uni}, \eqref{eq:constraint_E}, \IEEEyessubnumber\\
	& \mathbf{x}^\tau \in \mathcal{X}, ~ \mathbf{y}^\tau \in \mathcal{Y}, ~ z^\tau \in \mathcal{Z}, ~ \forall \tau, \IEEEyessubnumber
\end{IEEEeqnarray}
where $w_k^u$ and $w_k^\varepsilon$ are the weights for the unicast throughput and the harvested energy of sensor $k$, respectively, $\mathcal{X} = \{0,1\}^{\lvert\mathcal{C}\rvert}$, $\mathcal{Y} = \{0,1\}^{\lvert\mathcal{K}\rvert}$, and $\mathcal{Z} = \{0,1\}$.
We simply call $w_k^u$ and $w_k^\varepsilon$ the throughput weight and energy weight of sensor $k$, respectively.
In the problem, dealing with averaging over an infinite time horizon typically renders a problem intractable.
To overcome this, we employ the well-known fact that ergodicity causes the long-term time average to converge with Lebesgue measure $1$ to the expectation for almost all realizations of the fading process.
By using a channel vector, $\mathbf{h}$, in a generic time slot instead of $\mathbf{h}^\tau$ in a specific time slot $\tau$, we can reformulate problem \eqref{prob:primal_in_time} as
\begin{IEEEeqnarray}{c'l}
	\IEEEyesnumber\label{prob:primal_in_h}
	\maximize_{\mathbf{x}^\mathbf{h},\,\mathbf{y}^\mathbf{h},\,z^\mathbf{h},\,\forall\mathbf{h}} & \mathbb{E}_\mathbf{h} \left[ \sum_{k\in\mathcal{K}} \left( w_k^u y_k^\mathbf{h} \cdot r_k(z^\mathbf{h}) + w_k^\varepsilon \cdot \varepsilon_k(z^\mathbf{h}) \right) \right] \IEEEeqnarraynumspace \IEEEyessubnumber \\
	\subjto
	 & \mathbb{E}_\mathbf{h} \left[ x_{\ell(k)}^\mathbf{h} \cdot r_k(z^\mathbf{h}) \right] \ge \bar{B}_{\ell(k)}, ~ \forall k\in\mathcal{K}, \IEEEyessubnumber \label{eq:constraint_br2} \\
	& \mathbb{E}_\mathbf{h} \left[ y_k^\mathbf{h} \cdot r_k(z^\mathbf{h}) \right] \ge \bar{U}_k, ~ \forall k\in\mathcal{K}, \IEEEyessubnumber \label{eq:constraint_uni2} \\
	& \mathbb{E}_\mathbf{h} \left[ \varepsilon_k(z^\mathbf{h}) \right] \ge \bar{E}_k, ~ \forall k\in\mathcal{K}, \IEEEyessubnumber \label{eq:constraint_E2} \\
	& 	\sum_{c\in\mathcal{C}} x_c^\mathbf{h} + \sum_{k\in\mathcal{K}} y_k^\mathbf{h} \le 1, ~ \forall \mathbf{h}, \IEEEyessubnumber \label{eq:constraint_scheduling2} \\
	& \mathbf{x}^\mathbf{h} \in \mathcal{X}, ~ \mathbf{y}^\mathbf{h} \in \mathcal{Y}, ~ z^\mathbf{h} \in \mathcal{Z}, ~ \forall \mathbf{h}, \IEEEyessubnumber \label{eq:xyz_region2}
\end{IEEEeqnarray}
where $\mathbf{x}^\mathbf{h}=\{x_c^\mathbf{h}\}_{c\in\mathcal{C}}$, $\mathbf{y}^\mathbf{h}=\{y_k^\mathbf{h}\}_{k\in\mathcal{K}}$, and $\mathbb{E}_\mathbf{h}(\cdot)$ denotes the expectation over $\mathbf{h}$.
It is worth noting that, in any time slot $\tau$ with a realized channel vector $\mathbf{h}^\tau$, the decision on UMSM can be made according to the solution for $\mathbf{x}^\mathbf{h}$, $\mathbf{y}^\mathbf{h}$, and $z^\mathbf{h}$ obtained by solving problem \eqref{prob:primal_in_h} with $\mathbf{h}=\mathbf{h}^\tau$.

However, problem \eqref{prob:primal_in_h} is still difficult to solve as the underlying distribution of $\mathbf{h}$ is unknown.
Hence, we resolve this difficulty by leveraging duality theory and a stochastic optimization method.
To this end, we first define a Lagrangian function $L$ as
\begin{equation}
\begin{aligned}
	&L(\mathbf{X}, \mathbf{Y}, \mathbf{z}, \bm{\lambdaup}, \bm{\muup}, \bm{\etaup}) \\
	&\quad= \mathbb{E}_\mathbf{h} \left[ \sum_{k\in\mathcal{K}} \left( w_k^u y_k^\mathbf{h} \cdot r_k(z^\mathbf{h}) + w_k^\varepsilon \cdot \varepsilon_k (z^\mathbf{h}) \right) \right] \\
	&\quad\quad + \sum_{k\in\mathcal{K}} \lambda_k \left( \mathbb{E}_\mathbf{h} \left[ x_{\ell(k)}^\mathbf{h} \cdot r_k(z^\mathbf{h}) \right] - \bar{B}_{\ell(k)} \right) \\
	&\quad\quad + \sum_{k\in\mathcal{K}} \mu_k \left( \mathbb{E}_\mathbf{h} \left[ y_k^\mathbf{h} \cdot r_k(z^\mathbf{h}) \right] - \bar{U}_k \right) \\
	&\quad\quad + \sum_{k\in\mathcal{K}} \eta_k \left( \mathbb{E}_\mathbf{h} \left[ \varepsilon_k(z^\mathbf{h}) \right] - \bar{E}_k \right)\\
	&\quad= \mathbb{E}_\mathbf{h} \left[ \sum_{k\in\mathcal{K}} \left(\lambda_k x_{\ell(k)}^\mathbf{h} + (w_k^u + \mu_k) y_k^\mathbf{h}\right) \cdot r_k(z^\mathbf{h}) \right.\\
	&\quad\qquad\qquad\qquad\qquad\quad \left. + \sum_{k\in\mathcal{K}} (w_k^\varepsilon + \eta_k) \cdot \varepsilon_k(z^\mathbf{h}) \right]\\
	&\quad\quad - \sum_{k\in\mathcal{K}} \left( \lambda_k \bar{B}_{\ell(k)} + \mu_k \bar{U}_k + \eta_k \bar{E}_k \right),
\end{aligned} \label{eq:lagrangian}
\end{equation}
where $\mathbf{X}=\{\mathbf{x}^\mathbf{h}\}_{\mathbf{h}}$, $\mathbf{Y}=\{\mathbf{y}^\mathbf{h}\}_{\mathbf{h}}$, and $\mathbf{z}=\{z^\mathbf{h}\}_{\mathbf{h}}$, $\bm{\lambdaup}=\{\lambda_k\}_{k\in\mathcal{K}}$, $\bm{\muup} = \{\mu_k\}_{k\in\mathcal{K}}$, $\bm{\etaup} = \{\eta_k\}_{k\in\mathcal{K}}$, and $\bm{\lambdaup}$, $\bm{\muup}$, and $\bm{\etaup}$ are the nonnegative Lagrangian multiplier vectors corresponding to the multicast constraints \eqref{eq:constraint_br2}, the unicast constraints \eqref{eq:constraint_uni2}, and the energy constraints \eqref{eq:constraint_E2}, respectively.
Using \eqref{eq:lagrangian}, the dual problem of problem \eqref{prob:primal_in_h} can be formulated as follows:
\begin{IEEEeqnarray}{c'l}
	\IEEEyesnumber\label{prob:dual}
	\minimize_{\bm{\lambdaup},\,\bm{\muup},\,\bm{\etaup}} & g(\bm{\lambdaup}, \bm{\muup}, \bm{\etaup}) \IEEEyessubnumber \\
	\subjto
	 & \bm{\lambdaup} \succeq \mathbf{0}_K, ~ \bm{\muup} \succeq \mathbf{0}_K, ~ \bm{\etaup} \succeq \mathbf{0}_K, \IEEEyessubnumber
\end{IEEEeqnarray}
where $\succeq$ denotes an elementwise inequality, $\mathbf{0}_K$ is a zero vector of size $K$, and
\begin{IEEEeqnarray}{rCc'l}
	\IEEEyesnumber\label{prob:dual_obj}
	g(\bm{\lambdaup}, \bm{\muup}, \bm{\etaup}) & = & \maximize_{\mathbf{X},\, \mathbf{Y},\, \mathbf{z}} & L(\mathbf{X}, \mathbf{Y}, \mathbf{z}, \bm{\lambdaup}, \bm{\muup}, \bm{\etaup}) \IEEEyessubnumber \\
	&&\subjto
	  & \eqref{eq:constraint_scheduling2}, \eqref{eq:xyz_region2}. \IEEEyessubnumber
\end{IEEEeqnarray}
It is noteworthy that although problem \eqref{prob:primal_in_h} is not a convex program, in our case, the duality gap between problem \eqref{prob:primal_in_h} and its dual problem \eqref{prob:dual} vanishes, resulting in no loss of optimality in dual transformation.

\begin{theorem} \label{thm:zero-duality-gap}
The strong duality holds between problem \eqref{prob:primal_in_h} and problem \eqref{prob:dual}.
\end{theorem}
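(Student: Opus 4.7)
The plan is to establish zero duality gap through concavity of the perturbation function associated with problem~\eqref{prob:primal_in_h}. This is the standard tool for stochastic wireless resource-allocation duality, and it accommodates the nonconvexity due to the integer sets $\mathcal{X},\mathcal{Y},\mathcal{Z}$. The critical structural fact I would exploit is that the objective and every non-pointwise constraint depend on the policy only through expectations over the continuously distributed fading state $\bh$, so the pointwise integrality can be averaged away by a measurable time-sharing construction.

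First, I would introduce the perturbation function
\begin{equation*}
p(\bv) \;:=\; \sup\, \E_\bh\!\Bigl[\,\sum_{k\in\mathcal{K}} \bigl(w_k^u\, y_k^\bh\, r_k(z^\bh) + w_k^\varepsilon\, \varepsilon_k(z^\bh)\bigr)\Bigr],
\end{equation*}
where $\bv = (\bv^B,\bv^U,\bv^E)\in\mathbb{R}^{3K}$ jointly shifts the right-hand sides of \eqref{eq:constraint_br2}--\eqref{eq:constraint_E2} and the supremum runs over all policies still satisfying the perturbed averaged constraints together with \eqref{eq:constraint_scheduling2} and \eqref{eq:xyz_region2}. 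By definition $p(\bm{0})$ equals the primal optimum of \eqref{prob:primal_in_h}, and a standard convex-analytic fact says that if $p$ is concave and upper semicontinuous at the origin, the biconjugate identity $p(\bm{0}) = p^{**}(\bm{0})$ equates it with the optimum of the dual problem \eqref{prob:dual}. Strong duality thus reduces to concavity plus mild regularity of $p$.

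The main step is to show $p$ concave via a Lyapunov-type time-sharing argument enabled by the continuity of the fading distribution assumed in Section~\ref{sec:swipt_system_model}. Fix any $\alpha\in(0,1)$, any two perturbations $\bv^1,\bv^2$, and any $\epsilon$-optimal policies $\mathcal{S}_i = (\mathbf{X}^i,\mathbf{Y}^i,\mathbf{z}^i)$ attaining $p(\bv^i)$ for $i=1,2$. Because the law of $\bh$ is atomless, Lyapunov's convexity theorem produces a measurable partition $\{\mathcal{H}_1,\mathcal{H}_2\}$ of the support of $\bh$ such that the spliced policy $\mathcal{S}_\alpha(\bh) := \mathcal{S}_1(\bh)\mathbf{1}_{\mathcal{H}_1}(\bh) + \mathcal{S}_2(\bh)\mathbf{1}_{\mathcal{H}_2}(\bh)$ realizes exactly the $\alpha/(1-\alpha)$ convex combination of the $3K+1$ relevant expectations (objective plus the $3K$ averaged constraints) simultaneously. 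Since $\mathcal{S}_\alpha$ is a pointwise measurable selection from $\mathcal{X}\times\mathcal{Y}\times\mathcal{Z}$, it automatically obeys \eqref{eq:constraint_scheduling2} and \eqref{eq:xyz_region2}, so it certifies $p(\alpha\bv^1+(1-\alpha)\bv^2)\ge \alpha\, p(\bv^1)+(1-\alpha)\, p(\bv^2)-O(\epsilon)$; sending $\epsilon\downarrow 0$ yields concavity of $p$. A mild Slater-type condition---strict feasibility of the averaged constraints, which holds whenever the original instance is non-degenerate---then delivers continuity of $p$ at the origin and attainment in \eqref{prob:dual}, closing the argument.

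The principal obstacle is the Lyapunov splicing itself: one must exhibit a single measurable set $\mathcal{H}_1$ that simultaneously apportions $3K+1$ integrals in the prescribed proportion. This is precisely the content of Lyapunov's theorem for finite-dimensional vector measures with atomless marginals, which is available here because the block-fading channel gains are absolutely continuous (e.g., Rayleigh or Rician). Everything else---existence of the $\epsilon$-optimal policies, measurability of the spliced policy, and the passage from concavity to $p=p^{**}$---is routine convex-duality bookkeeping.
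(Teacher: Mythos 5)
Your proposal is correct and follows the standard perturbation-function/Lyapunov-convexity (time-sharing) argument for zero duality gap in stochastic resource allocation over atomless fading distributions, which is exactly the route taken by the proof the paper defers to (Theorem 6 of the cited prior work); the paper itself omits the argument and only provides that citation. The one hypothesis you correctly make explicit---nonatomicity of the channel distribution, needed for Lyapunov's theorem---is likewise implicit in the paper's reliance on continuous (e.g., Rayleigh) fading.
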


\begin{proof}
As the proof of Theorem \ref{thm:zero-duality-gap} is analogous to that in our previous work, we omit it here for brevity.
Please refer to Theorem 6 in \cite{kim2022low} for the detailed proof.
\end{proof}

We now explain how to solve problem \eqref{prob:dual}.
We first focus on \eqref{eq:lagrangian}.
We can observe that the decision variables $\mathbf{X}$, $\mathbf{Y}$, and $\mathbf{z}$ in \eqref{prob:dual_obj} do not participate in the second term outside the expectation, and the first expectation term can be separated for each channel vector.
Hence, for any given $\bm{\lambdaup}$, $\bm{\muup}$, and $\bm{\etaup}$, solving the maximization in \eqref{prob:dual_obj} is equivalent to solving the following subproblems separately for every channel vector $\mathbf{h}$:
\begin{IEEEeqnarray}{c'l}
	\IEEEyesnumber\label{prob:dual_sep}
	\maximize_{\mathbf{x}^\mathbf{h},\,\mathbf{y}^\mathbf{h},\,z^\mathbf{h}} & \sum_{k\in\mathcal{K}} \left(\lambda_k x_{\ell(k)}^\mathbf{h} + (w_k^u + \mu_k) y_k^\mathbf{h}\right) \cdot r_k(z^\mathbf{h}) \IEEEeqnarraynumspace \IEEEyessubnumber \\
	 & + \sum_{k\in\mathcal{K}} (w_k^\varepsilon + \eta_k) \cdot \varepsilon_k(z^\mathbf{h}) \IEEEyessubnumber \\
	\subjto
	 & \sum_{c\in\mathcal{C}} x_c^\mathbf{h} + \sum_{k\in\mathcal{K}} y_k^\mathbf{h} \le 1, \IEEEyessubnumber \\
	 & \mathbf{x}^\mathbf{h} \in \mathcal{X}, ~ \mathbf{y}^\mathbf{h} \in \mathcal{Y}, ~ z^\mathbf{h} \in \mathcal{Z}. \IEEEyessubnumber
\end{IEEEeqnarray}
We will elaborate on how to solve problem \eqref{prob:dual_sep} in the upcoming section.
Therefore, in this section, we proceed with the assumption that a solution to problem \eqref{prob:dual_sep} can be obtained.

Let us return to solving problem \eqref{prob:dual}.
Although we can solve the maximization in problem \eqref{prob:dual_obj} by solving problem \eqref{prob:dual_sep} for every channel realization, problem \eqref{prob:dual} is still a stochastic programming problem that typically requires knowledge of the distribution to solve.
However, since problem \eqref{prob:dual} is convex stochastic programming with respect to $\{\bm{\lambdaup}, \bm{\muup}, \bm{\etaup}\}$, it can be solved using the well-known stochastic subgradient method.
The Lagrangian multipliers are iteratively updated as
\begin{align}
	\lambda_k^{\tau+1} &= \left[ \lambda_k^\tau - \zeta_\lambda^\tau \cdot \big( x_{\ell(k)}^\tau \cdot r_k(z^\tau) - \bar{B}_{\ell(k)} \big) \right]^+, ~ \forall k\in\mathcal{K}, \label{eq:update_lambda}\\
	\mu_k^{\tau+1} &= \left[ \mu_k^\tau - \zeta_\mu^\tau \cdot \big( y_k^\tau \cdot r_k(z^\tau) - \bar{U}_k \big) \right]^+, ~ \forall k\in\mathcal{K}, \label{eq:update_mu}\\
	\eta_k^{\tau+1} &= \left[ \eta_k^\tau - \zeta_\eta^\tau \cdot \big( \varepsilon_k(z^\tau) - \bar{E}_k \big) \right]^+, ~ \forall k\in\mathcal{K}, \label{eq:update_eta}
\end{align}
where $[\cdot]^+=\max\{0,\cdot\}$, $\lambda_k^\tau$, $\mu_k^\tau$, and $\eta_k^\tau$ are the Lagrangian multipliers in time slot $\tau$, $\zeta_\lambda^\tau$, $\zeta_\mu^\tau$, and $\zeta_\eta^\tau$ are their corresponding step sizes, and $\{x_c^\tau\}_{c\in\mathcal{C}}$, $\{y_k^\tau\}_{k\in\mathcal{K}}$, and $z^\tau$ are the solution to problem \eqref{prob:dual_sep} with $\mathbf{h}=\mathbf{h}^\tau$, $\lambda_k=\lambda_k^\tau$, $\mu_k=\mu_k^\tau$, and $\eta_k=\eta_k^\tau$, $\forall k\in\mathcal{K}$.
Note that the Lagrangian multipliers will converge to the optimal solution, $\{\bm{\lambdaup}^*, \bm{\muup}^*, \bm{\etaup}^*\}$, to problem \eqref{prob:dual} if the step sizes are set to be nonnegative and square summable, but not summable, i.e., for all $i\in\{\lambda, \mu, \eta\}$,
\begin{equation} \label{eq:stepsize_condition}
	\zeta_i^\tau \ge 0, ~ \sum_{\tau=1}^\infty \zeta_i^\tau = \infty, ~ \text{and} ~ \sum_{\tau=1}^\infty (\zeta_i^\tau)^2 < \infty.
\end{equation}
Algorithm \ref{alg:scheduling}, referred to as the UMSM scheduling algorithm, summarizes the process we have described thus far for solving the UMSM problem.

\begin{algorithm}[!t]
\small
\DontPrintSemicolon
Initialize: $\tau=1$, $\bm{\lambdaup}^\tau=\mathbf{0}_K$, $\bm{\muup}^\tau=\mathbf{0}_K$, and$\bm{\etaup}^\tau=\mathbf{0}_K$.\\
Determine $\eta_\lambda^\tau$, $\eta_\mu^\tau$, and $\eta_\eta^\tau$ such that \eqref{eq:stepsize_condition} is satisfied.\\
\For{each time slot $\tau$}{
	Obtain $\mathbf{h}^\tau$ by any channel estimation technique.\\
	Using Algorithm \ref{alg:selection}, solve problem \eqref{prob:dual_sep} with $\mathbf{h}=\mathbf{h}^\tau$, $\bm{\lambdaup}=\bm{\lambdaup}^\tau$, $\bm{\muup}=\bm{\muup}^\tau$, and $\bm{\etaup}=\bm{\etaup}^\tau$.\\
	Transmit the signal generated by the obtained solution.\\
	Calculate $\bm{\lambdaup}^{\tau+1}$, $\bm{\muup}^{\tau+1}$, and $\bm{\etaup}^{\tau+1}$ using \eqref{eq:update_lambda}, \eqref{eq:update_mu}, and \eqref{eq:update_eta}, respectively.\\
	$\tau \gets \tau+1$
}
\caption{UMSM Scheduling Algorithm}
\label{alg:scheduling}
\end{algorithm}

\section{Selection on Unicast/Multicast, IoT Sensor, and Modulation}
\label{sec:joint_selection}

In this section, we develop an algorithm called the UMSM selection algorithm to solve problem \eqref{prob:dual_sep}.
This algorithm involves determining whether to provide a multicast service for a specific task or a unicast service to a specific sensor, and selecting the modulation scheme, ARSK or ADSK, for a given channel realization $\mathbf{h}$ in a time slot.
For brevity, hereafter, we omit the superscript $\mathbf{h}$ and consider problem \eqref{prob:dual_sep} for a fixed channel in a generic time slot as
\begin{IEEEeqnarray}{c'l}
	\IEEEyesnumber\label{prob:dual_sep2}
	\maximize_{\mathbf{x},\,\mathbf{y},\,z} & \sum_{k\in\mathcal{K}} \left(\lambda_k x_{\ell(k)} + (w_k^u + \mu_k) y_k\right) \cdot r_k(z) \IEEEeqnarraynumspace \IEEEnonumber \\
	 & + \sum_{k\in\mathcal{K}} (w_k^\varepsilon + \eta_k) \cdot \varepsilon_k(z) \IEEEyessubnumber \label{eq:dual_sep2_obj}\\
	\subjto
	 & \sum_{c\in\mathcal{C}} x_c + \sum_{k\in\mathcal{K}} y_k \le 1, \IEEEyessubnumber \label{eq:dual_sep2_const} \\
	 & \mathbf{x} \in \mathcal{X}, ~ \mathbf{y} \in \mathcal{Y}, ~ z \in \mathcal{Z}. \IEEEyessubnumber
\end{IEEEeqnarray}
This problem is a nonconvex problem because of its integer variables.
Moreover, as can be inferred from \eqref{eq:E}, \eqref{eq:Ri}, \eqref{eq:Rktau}, and \eqref{eq:Ektau}, the throughput and harvested energy functions, $r_k$ and $\varepsilon_k$, in the objective function are neither concave nor convex.
To overcome these challenges, we develop a heuristic algorithm that can effectively solve problem \eqref{prob:dual_sep2} in an optimal way.

First, suppose that the modulation scheme is fixed to either ARSK or ADSK.
In this case, $z$ is fixed to either $1$ or $0$, and only the decision variables $\mathbf{x}$ and $\mathbf{y}$ remain in problem \eqref{prob:dual_sep2}.
Also, the second summation term in \eqref{eq:dual_sep2_obj} is independent of the decision variables.
As a result, we can simplify problem \eqref{prob:dual_sep2} by reducing it to the following formulation:
\begin{IEEEeqnarray}{c'l}
	\IEEEyesnumber\label{prob:xy}
	\maximize_{\mathbf{x},\,\mathbf{y}} & \sum_{k\in\mathcal{K}} \left(\lambda_k x_{\ell(k)} + (w_k^u + \mu_k) y_k\right) \cdot r_k(z) \IEEEeqnarraynumspace \IEEEyessubnumber \\
	\subjto
	 & \eqref{eq:dual_sep2_const}, ~ \mathbf{x} \in \mathcal{X}, ~ \mathbf{y} \in \mathcal{Y}. \IEEEyessubnumber
\end{IEEEeqnarray}
This is a natural consequence of the fact that the harvested energy of sensors is independent of scheduling when the modulation scheme is fixed.

Next, as can be seen from \eqref{eq:dual_sep2_const}, either a multicast service or a unicast service can be selected exclusively.
In the case of a multicast service, $y_k$ will be zero for all $k\in\mathcal{K}$, and an optimal task $c^*(z)$, whose content will be multicast to sensors in $\mathcal{K}(c^*(z))$, can be obtained as
\begin{equation}
	c^*(z) = \argmax_{c\in\mathcal{C}} ~ b(c;z) \triangleq \sum_{k\in\mathcal{K}(c)} \lambda_k \cdot r_k(z). \label{eq:c_star}
\end{equation}
Similarly, in the case of a unicast service, $x_c$ will be zero for all $c\in\mathcal{C}$, and an optimal sensor $k^*(z)$ to be selected as a unicast service receiver can be obtained as
\begin{equation}
	k^*(z) = \argmax_{k\in\mathcal{K}} ~ u(k;z) \triangleq (w_k^u+\mu_k) \cdot r_k(z). \label{eq:k_star}
\end{equation}
Lastly, by introducing a dummy indicator variable $\mathbf{1}_u(z)$ that takes a value of zero if $b(c^*(z);z)$ is greater than $u(k^*(z);z)$, and one otherwise, the optimal modulation scheme $z^*$ can be easily obtained as
\begin{equation}
\begin{aligned}[b]
	&z^* = \max_{z\in\{0,1\}} ~ \Bigg\{ (1-\mathbf{1}_u(z)) \cdot b(c^*(z);z) \\
	&\qquad\qquad + \mathbf{1}_u(z) \cdot u(k^*;z) + \sum_{k\in\mathcal{K}} (w_k^\varepsilon + \eta_k) \cdot \varepsilon_k(z) \Bigg\}
\end{aligned}.\label{eq:z_star}
\end{equation}
With $z^*$ obtained by \eqref{eq:z_star}, $\mathbf{x}^*=\{x_c^*\}_{c\in\mathcal{C}}$ and $\mathbf{y}^*=\{y_k^*\}_{k\in\mathcal{K}}$ can be finally determined as follows:
$\mathbf{x}^*$ such that $x_{c^*(z^*)}^* = 1 - \mathbf{1}_u(z^*)$ and $x_c^* = 0$ for all $c\in\mathcal{C}\setminus\{c^*(z^*)\}$ and $\mathbf{y}^*$ such that $y_{k^*(z^*)}^* = \mathbf{1}_u(z^*)$ and $y_k^* = 0$ for all $k\in\mathcal{K}\setminus\{k^*(z^*)\}$.
Algorithm \ref{alg:selection}, referred to as the UMSM selection algorithm, summarizes the process described in this section.


\begin{algorithm}[!t]
\small
\DontPrintSemicolon
Initialize: $\mathbf{x}^* \gets \mathbf{0}_C$ and $\mathbf{y}^* \gets \mathbf{0}_K$.\\
Obtain $c^*(z)$ and $k^*(z)$ using \eqref{eq:c_star} and \eqref{eq:k_star}, respectively.\\
\leIf{$b(c^*(z);z) > u(k^*(z);z)$}{$\mathbf{1}_u(z) \gets 0$}{$\mathbf{1}_u(z) \gets 1$.}
Obtain $z^*$ using \eqref{eq:z_star}.\\
$x_{c^*(z^*)}^* \gets 1-\mathbf{1}_u(z^*)$ and $y_{k^*(z^*)}^* \gets \mathbf{1}_u(z^*)$.\\
\Return{$\{\mathbf{x}^*, \mathbf{y}^*, z^*\}$}
\caption{UMSM Selection Algorithm}
\label{alg:selection}
\end{algorithm}

\section{Simulation Results}
\label{sec:sim}

In this section, we present the results of our simulations to assess the effectiveness of our proposed algorithms.
In the following simulations, for the receiver model, we set the rectifier noise variances in Fig. \ref{fig:Receiver architecture} to $\sigma_p^2 = -40$\,dBm and $\sigma_n^2 = -40$\,dBm.
In Fig. \ref{fig:Receiver circuit}, we use Skyworks SMS7630 Schottky diodes with $I_s=5$\,$\mu$A, $\rho=1.05$, and $V_{th}=25.85$\,mV.
To maintain perfect impedance matching with $50$\,$\Omega$ impedance commonly used in RF applications, we set $R_r=50$\,$\Omega$.
Additionally, we set the capacitances of the two capacitors to $100$\,pF and the load resistance $R_L$ to $150$\,$\Omega$.
For the signal model, we assume that the transmitted signal is composed of two tones with frequencies of $f_1 = 900$\,MHz and $f_2 = 1.8$\,GHz, respectively, as in \cite{kim2019dual}.
We also assume a $4$-ary transmission, i.e., $M=4$, as shown in Figs. \ref{fig:Throughput_curves} and \ref{fig:Energy_curves}.
The transmit power budget, $P_t$, of the H-AP is set to $40$\,dBm.
For the channel model, we adopt the empirical path loss model measured in a room-to-room non-line-of-sight propagation condition \cite{xu2007indoor}, where the path loss in dB over distance $d_m$ in meters is set to $29.3+60.5\log_{10}d_m$.
We set the antenna gains of the H-AP and the sensors to $3$\,dBi and $0$\,dBi, respectively.
Additionally, we consider the shadow fading with a standard deviation of $1.8$\,dB and the Rayleigh small-scale fading with unit variance.
Within Algorithm \ref{alg:scheduling}, to ensure convergence of the algorithm, the step sizes, $\zeta_\lambda$ in \eqref{eq:update_lambda}, $\zeta_\mu$ in \eqref{eq:update_mu}, and $\zeta_\eta$ in \eqref{eq:update_eta}, are all set to $10^3/(10^3+\tau-1)$, which satisfy the conditions described in \eqref{eq:stepsize_condition}.
Also, the throughput and the harvested energy metrics are measured in units of bits per channel use and $\mu$Wh, respectively.

In the first analysis, we focus on the simulation results obtained as the energy weights of sensors increase, while the throughput weights remain fixed at $1$.
To this end, we consider a scenario where $7$ sensors are performing $3$ tasks, i.e., $C=3$ and $K=7$.
Specifically, sensors $1$ and $2$ perform task $1$, sensors $3$, $4$, and $5$ perform task $2$, and sensors $6$ and $7$ perform task $3$.
These correspond to $\mathcal{K}(1)=\{1,2\}$, $\mathcal{K}(2)=\{3,4,5\}$, $\mathcal{K}(3)=\{6,7\}$, $\ell(1)=\ell(2)=1$, $\ell(3)=\ell(4)=\ell(5)=2$, and $\ell(6)=\ell(7)=3$.
Additionally, we assume that sensors $1$ to $7$ are located sequentially at distances of $4$\,m, $5$\,m, $4$\,m, $5$\,m, $6$\,m, $4$\,m, $5$\,m away from the H-AP, respectively.
In this scenario, we execute Algorithm \ref{alg:scheduling} for $100,000$ time slots with the following parameter settings for the minimum requirements: $\bar{B}_c=0.05$, $\forall c$, and $\bar{U}_k=0.2$, $\forall k$.
On the other hand, to analyze the simulation results more clearly with respect to the changes in the energy weights, we set $\bar{E}_k$, $\forall k$, to zero.
Later, we will analyze the simulation results for non-zero values of $\bar{E}_k$, $\forall k$.

\begin{figure}[!t]
\centerline{\includegraphics[width=.6\columnwidth]{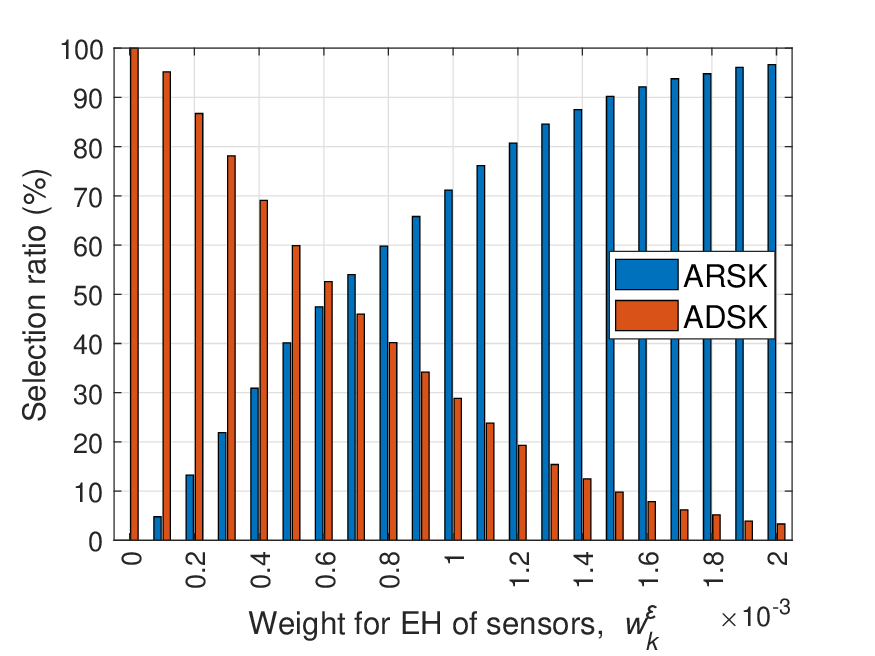}}
\caption{Selection ratios of modulation schemes with respect to the weights for EH of sensors.}
\label{fig:selection rate_we}
\end{figure}

\begin{figure}[!t]
\centerline{\includegraphics[width=.6\columnwidth]{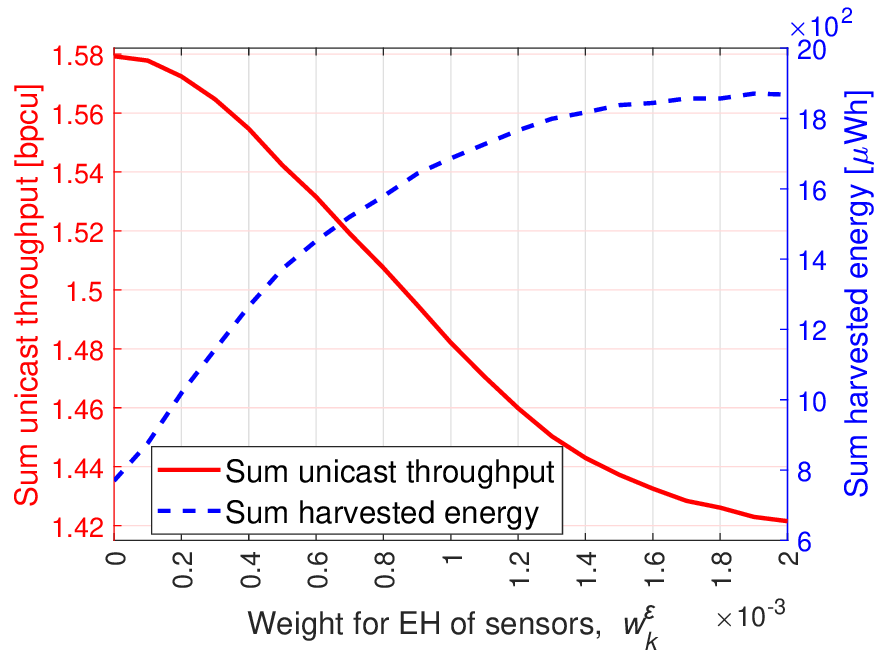}}
\caption{Performance with respect to the weights for EH of sensors.}
\label{fig:performance_we}
\end{figure}

Figs. \ref{fig:selection rate_we} and \ref{fig:performance_we} show the variations in selection ratios, as well as the performance metrics, sum unicast throughput and sum harvested energy, for the ARSK and ADSK modulation schemes, respectively, as the energy weights of sensors increase.
It is worth noting that, as depicted in Figs. \ref{fig:Throughput_curves} and \ref{fig:Energy_curves}, ADSK outperforms ARSK in terms of throughput performance, whereas ARSK outperforms ADSK in terms of harvested energy performance.
Hence, the results shown in Figs. \ref{fig:selection rate_we} and \ref{fig:performance_we} are reasonable since the selection ratio for ARSK increases with the increase in the energy weights of sensors, leading to a monotonic rise in sum harvested energy and a monotonic decline in sum unicast throughput.


\begin{figure}[!t]
\centerline{\includegraphics[width=.7\columnwidth]{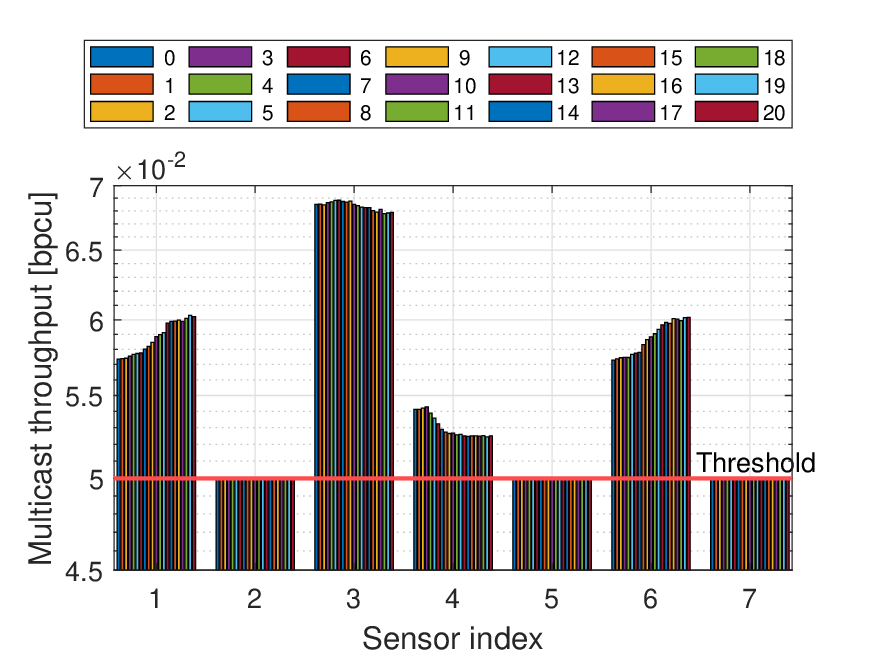}}
\caption{Multicast throughput for each sensor.}
\label{fig:constraint_multicast_we}
\end{figure}

\begin{figure}[!t]
\centerline{\includegraphics[width=.7\columnwidth]{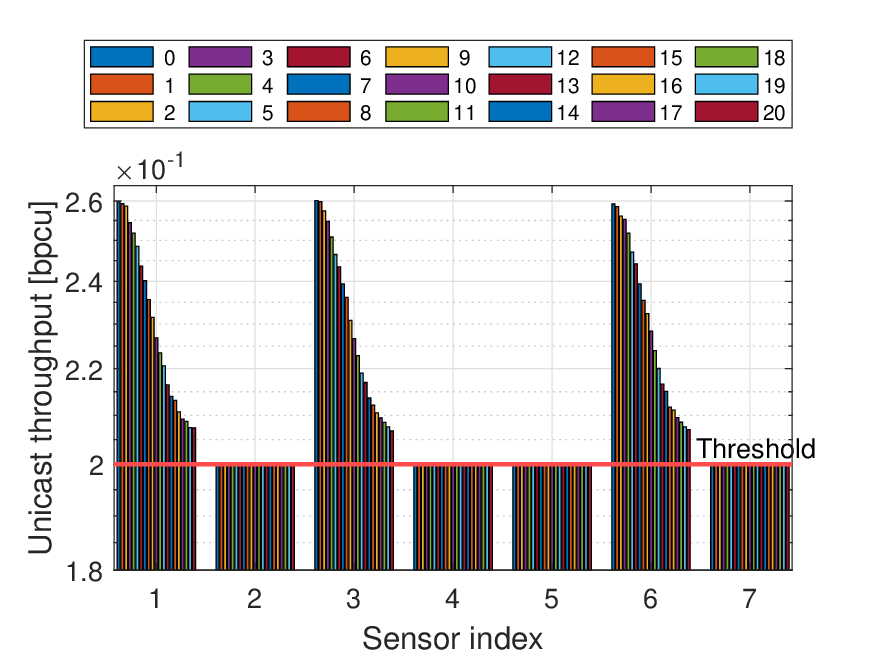}}
\caption{Unicast throughput for each sensor.}
\label{fig:constraint_unicast_we}
\end{figure}

Fig. \ref{fig:constraint_multicast_we} shows the multicast throughput of each sensor, with $21$ bars corresponding to each sensor index.
For any $l\in\{0,1,\ldots,20\}$, the $l$th bar for each sensor index represents its multicast throughput when the energy weights are set as to be as $w_k^\varepsilon = l \times 10^{-4}$, $\forall k$.
For example, the $21$ bars for sensor $3$ indicate its multicast throughputs for $21$ different energy weights from $0$ to $0.002$.
The threshold line represents the minimum average multicast throughput required, set at $0.05$.
Recall that when multicast service for task $c$ is scheduled, all sensors in $\mathcal{K}(c)$ receive the multicast service simultaneously from the H-AP.
For example, sensors $1$ and $2$, sensors $3$, $4$, and $5$, and sensors $6$ and $7$ always receive multicast services together.
However, due to their varying distances from the H-AP, the multicast throughputs of these sensor pairs can differ.
Notably, the multicast throughputs of sensors $2$, $5$, and $7$ \textemdash{} the ones farthest from the H-AP for each task \textemdash{} barely reach the threshold value of $0.05$ with negligible errors and do not exceed it significantly.
This is because scheduling as few multicast services as possible is desirable, as long as the constraints in \eqref{eq:constraint_br} are met, since the multicast throughput does not affect the objective value.

Fig. \ref{fig:constraint_unicast_we} shows the unicast throughput for each sensor, using the same format as in Fig. \ref{fig:constraint_multicast_we}, with a threshold of $0.2$ since $\bar{U}_k=0.2$, $\forall k$.
Unlike multicast throughput, unicast throughput is a performance metric for each sensor, and it directly affects the objective value.
Therefore, it is obvious that a higher sum unicast throughput can be achieved by selecting a sensor with the highest instantaneous channel gain for each time slot and providing a unicast service thereto.
The figure demonstrates that our algorithm prioritizes providing unicast service to sensors with the highest instantaneous channel gains, serving sensors $1$, $3$, and $6$ as much as possible since they are closest to the H-AP, while serving the remaining sensors only to the extent that the unicast throughput constraints in \eqref{eq:constraint_uni} are met.
An interesting observation is that since sensors $1$, $3$, and $6$ have the same distance from the H-AP and equal throughput weights, they have nearly the same unicast throughput distribution over $21$ bars, despite the randomness resulting from the stochastically generated channel gains over $100,000$ time steps.
Moreover, for these sensors, as the bars shift towards the right, their energy weights increase, leading to a decrease in unicast throughput since ARSK is exploited more, as already shown in Fig. \ref{fig:selection rate_we}.

Next, we focus on the impact of the minimum average harvested energy requirement on the system.
To clearly illustrate its effects, in the following simulation, we set the energy weights and the minimum average harvested energy requirements to zero and $3.6$\,$\mu$Wh, respectively, for all sensors, i.e., $w_k^\varepsilon = 0$ and $\bar{E}_k = 3.6$\,$\mu$Wh, $\forall k$.
%
Additionally, unlike the previous simulation where the throughput requirements of all sensors are equal, in this simulation we set different throughput requirements for each sensor as follows: $\bar{B}_1=0.04$, $\bar{B}_2=0.02$, and $\bar{B}_3=0.05$.
This means that the minimum average multicast throughputs for the three tasks are set differently.
Furthermore, each sensor has its own minimum average unicast throughput requirement as follows: $\bar{U}_1=\bar{U}_4=0.2$, $\bar{U}_2=\bar{U}_5=\bar{U}_7=0.1$, and $\bar{U}_3=\bar{U}_6=0.3$.

\begin{figure}[!t]
\centerline{\includegraphics[width=.7\columnwidth]{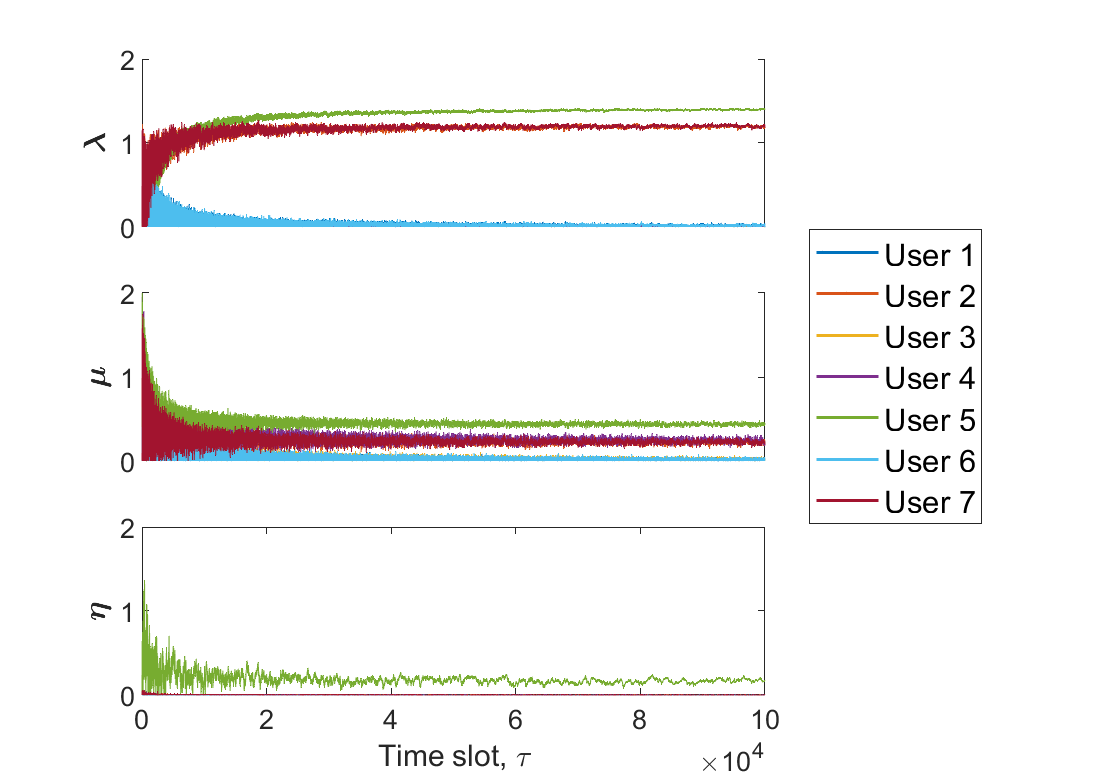}}
\caption{Convergence of the Lagrangian multipliers.}
\label{fig:Lagrangian_multiplier}
\end{figure}

We first demonstrate the convergence of our proposed UMSM scheduling algorithm in Fig. \ref{fig:Lagrangian_multiplier}, which comprises three graphs showing the changes in different parameters over time slots for all sensors.
The top graph depicts the changes in $\bm{\lambdaup}=\{\lambda_1, \ldots, \lambda_7\}$, the Lagrangian multiplier for the multicast throughput requirements.
The middle graph shows the changes in $\bm{\muup}=\{\mu_1, \ldots, \mu_7\}$, the Lagrangian multiplier for the unicast throughput requirements.
The bottom graph displays the changes in $\bm{\etaup}=\{\eta_1, \ldots, \eta_7\}$, the Lagrangian multiplier for the harvested energy requirements.
From these graphs, we can see that each Lagrangian multiplier eventually converges to a certain stationary point, even though the stationary points can be different due to the differences in sensors in terms of the minimum required multicast and unicast throughputs, the minimum required harvested energy, and the channel conditions from the H-AP.

\begin{figure}[!t]
\centerline{\includegraphics[width=.6\columnwidth]{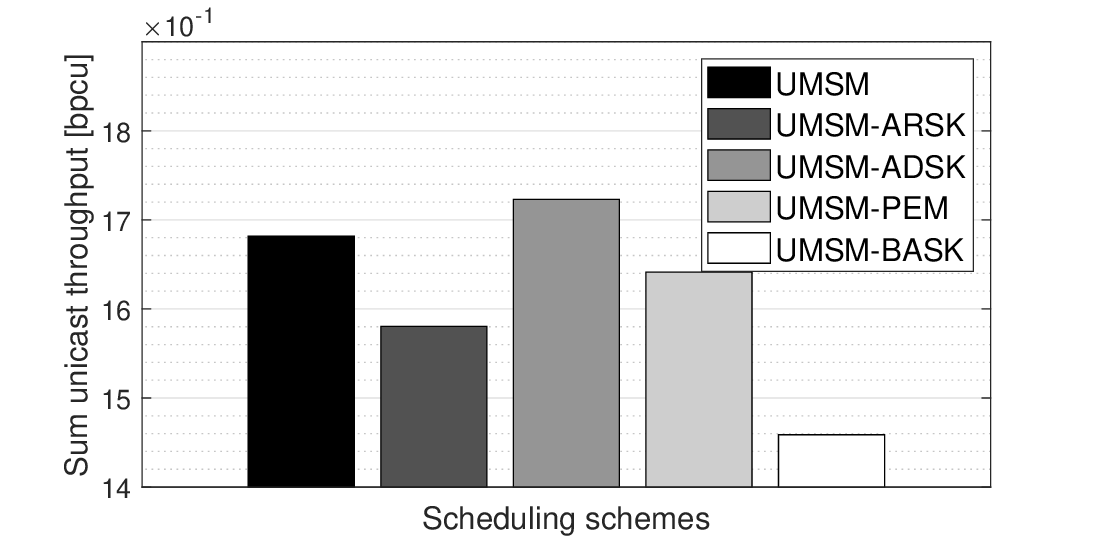}}
\caption{Sum unicast throughput for different scheduling schemes.}
\label{fig:sum_unicast_throughput}
\end{figure}

\begin{figure}[!t]
\centerline{\includegraphics[width=.6\columnwidth]{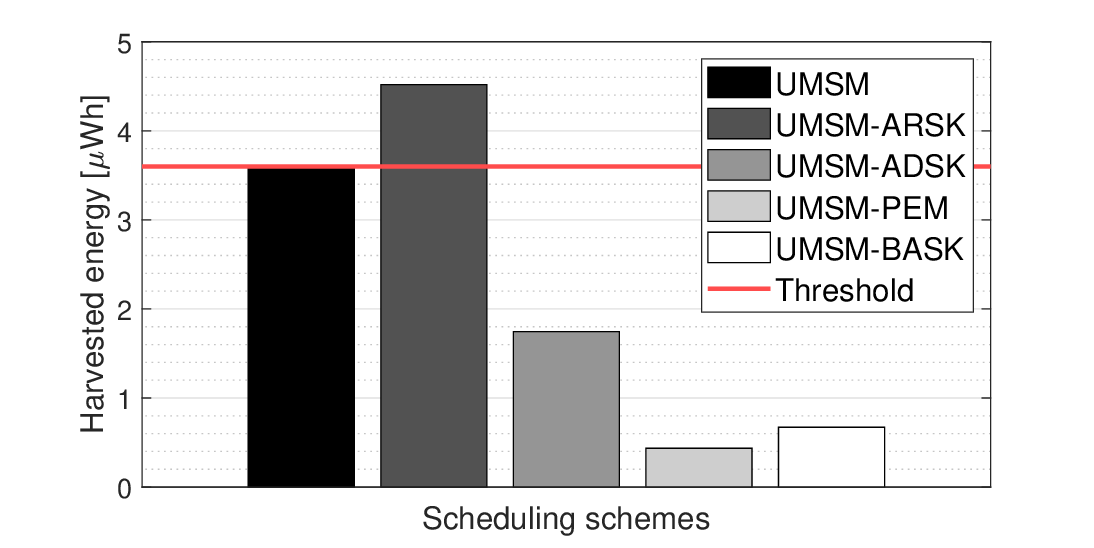}}
\caption{Harvested energy of sensor $5$, the farthest from the H-AP, for different scheduling schemes.}
\label{fig:harvested_energy_of_sensor5}
\end{figure}

We present a comparison of the performance of our proposed UMSM scheduling algorithm with four baseline algorithms, UMSM-ARSK, UMSM-ADSK, UMSM-PEM, and UMSM-BASK, in terms of the sum unicast throughput in Fig. \ref{fig:sum_unicast_throughput}, as well as the harvested energy of sensor $5$ under the worst channel condition in Fig. \ref{fig:harvested_energy_of_sensor5}.
The baseline scheduling algorithms also attempt to solve the UMSM scheduling problem based on Algorithm \ref{alg:scheduling} with Algorithm \ref{alg:selection}, similar to our proposed UMSM scheduling algorithm.
However, while our UMSM scheduling algorithm opportunistically chooses an appropriate modulation scheme for each time slot, UMSM-ARSK, UMSM-ADSK, UMSM-PEM, and UMSM-BASK use fixed modulation schemes of ARSK, ADSK, PEM, and BASK, respectively.
First of all, we can observe that the UMSM-PEM and UMSM-BASK scheduling algorithms provide very low harvested energy, and they cannot even satisfy the minimum required harvested energy constraint of sensor $5$, which is under the worst channel condition due to its location farthest from the H-AP.
This is because, as previously demonstrated in \cite{kim2019dual}, PEM and BASK have much lower EH performance compared to ARSK and ADSK.
Meanwhile, as shown in Fig. \ref{fig:Throughput_curves}, ADSK outperforms ARSK in terms of throughput, and thus, the UMSM-ADSK scheduling algorithm achieves a higher sum unicast throughput than the other scheduling algorithms, including our UMSM scheduling algorithm.
However, due to the lower performance of ADSK on harvested energy compared to ARSK, the UMSM-ADSK scheduling algorithm cannot satisfy the minimum required harvested energy constraint of sensor $5$, like UMSM-PEM and UMSM-BASK.
Conversely, the UMSM-ARSK scheduling algorithm achieves higher harvested energy than the other scheduling algorithms since ARSK has better performance on harvested energy, but it has a lower sum unicast throughput compared to our UMSM scheduling algorithm.
It is important to note that since $w_k^\varepsilon$ is set to zero in this simulation, achieving higher harvested energy does not increase the objective value.
Thus, achieving harvested energy up to the threshold of $3.6$\,$\mu$Wh is sufficient, and achieving the highest possible sum unicast throughput will increase the objective value.
As a result, we can conclude that our proposed UMSM scheduling algorithm outperforms the other scheduling algorithms while ensuring the given constraints.

\begin{figure}[!t]
\centerline{\includegraphics[width=.6\columnwidth]{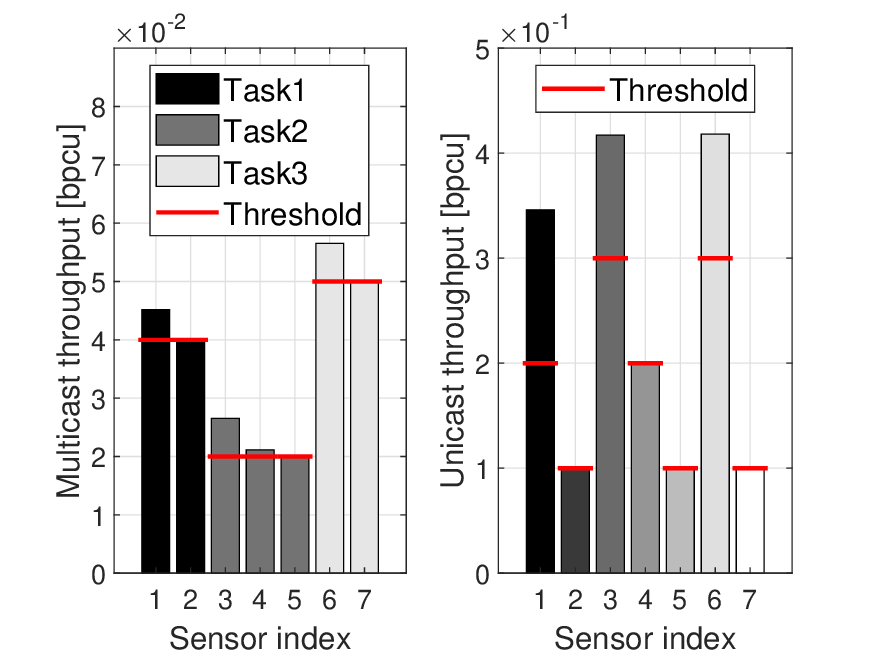}}
\caption{Multicast throughput for each sensor.}
\label{fig:throughput_each_sensor}
\end{figure}

Lastly, in Fig. \ref{fig:throughput_each_sensor}, we present the multicast and unicast throughputs of each sensor achieved by our UMSM scheduling algorithm.
The left graph in the figure shows the multicast throughput of each sensor with the minimum required multicast throughput for each task, while the right graph shows the unicast throughput of each sensor with the minimum required unicast throughput for each sensor.
In the left graph, we observe that multicast service is provided only up to the level at which the multicast throughput constraints for tasks are satisfied.
This is because the multicast throughput does not participate the objective value.
On the other hand, in the right graph, our UMSM scheduling algorithm satisfies the minimum required unicast throughputs of sensors and then intensively schedules sensors $1$, $3$, and $6$, which are closest to the H-AP and likely to have high channel gains.
This is because scheduling sensors with better channel gains results in a higher sum unicast throughput, which is the objective value itself since the energy weights are zero.
Therefore, our UMSM scheduling algorithm not only satisfies all the minimum required multicast and unicast throughputs constraints, as well as the minimum required harvested energy constraints, but also performs well by appropriately choosing between multicast and unicast services, determining the appropriate sensor to serve if unicast, and selecting the optimal modulation scheme between ARSK and ADSK.

%

%
%
%

\section{Conclusion}
\label{sec:conc}

In this paper, we have proposed a novel constraints-aware scheduling algorithm, UMSM, for SWIPT-based IoT sensor networks with multiple tasks and throughput and energy harvesting constraints.
Our proposed algorithm jointly optimizes the use of both multicast and unicast communications, as well as the selection of modulation schemes to maximize the weighted sum of the unicast throughput and harvested energy of all sensors while satisfying the minimum required multicast and unicast throughputs and harvested energy.
Through simulations, we have demonstrated that our UMSM algorithm outperforms two other scheduling algorithms with fixed modulation schemes, UMSM-ARSK and UMSM-ADSK.
Furthermore, we have shown that our UMSM algorithm well satisfies all minimum required multicast and unicast throughputs as well as the minimum required harvested energy constraints by opportunistically exploiting channel stochasticity while still achieving a high sum unicast throughput.
Overall, our proposed UMSM algorithm provides an efficient and effective solution for energy-constrained wireless sensor networks with multiple tasks and constraints.
Our work makes a valuable contribution to the field of SWIPT-based wireless sensor networks and opens up avenues for further research on energy-efficient solutions for larger-scale SWIPT-based IoT sensor networks.

\bibliographystyle{IEEEtran}
\bibliography{IEEEabrv,mybib}

\end{document}